\documentclass[letterpaper]{article}
\PassOptionsToPackage{table}{xcolor}
\usepackage[preprint]{style}
\usepackage[hyphens]{url}
\usepackage{graphicx}
\usepackage{natbib}
\usepackage{caption}
\usepackage{algorithm}
\usepackage{algorithmic}

\usepackage{newfloat}
\usepackage{listings}
\DeclareCaptionStyle{ruled}{labelfont=normalfont,labelsep=colon,strut=off}
\floatstyle{ruled}
\newfloat{listing}{tb}{lst}{}
\floatname{listing}{Listing}

\usepackage{booktabs}
\usepackage{amsmath, amssymb, amsthm}
\usepackage{multirow}
\usepackage{xspace}
\usepackage{enumitem}
\usepackage{comment}
\newtheorem{theorem}{Theorem}

\newcommand{\ours}{\textsc{SpreadMark}\xspace}
\newcommand{\BitAcc}{BA\xspace}
\newcommand{\DetectAcc}{DA\xspace}
\definecolor{rowgray}{gray}{0.9}

\title{\ours: Robust Image Watermarking via Spread-Spectrum Embedding}

\author{
    Wei Song\textsuperscript{\rm 1}\equalcontrib,
    Yuxin Cao\textsuperscript{\rm 2}\equalcontrib,
    Zhenchang Xing\textsuperscript{\rm 3},
    Liming Zhu\textsuperscript{\rm 3},\\
    Jin Song Dong\textsuperscript{\rm 2},
    Yulei Sui\textsuperscript{\rm 1},
    Jingling Xue\textsuperscript{\rm 1}
}
\affiliations{
    \textsuperscript{\rm 1}University of New South Wales, Australia\\
    \textsuperscript{\rm 2}National University of Singapore, Singapore\\
    \textsuperscript{\rm 3}CSIRO's Data61, Australia
}

\begin{document}
\pagestyle{plain}
\maketitle

\begin{abstract}
Invisible image watermarks are increasingly used for deepfake detection and provenance tracking, where they must survive not only incidental distortions but also deliberate removal. We revisit spread-spectrum embedding, a classical watermarking principle, inside a modern neural post-hoc watermarking architecture. Our starting point is a measurement: in existing encoder--decoder schemes each message bit occupies only a small fraction of the image, a shared contributing factor to their fragility, since removal then need only disturb the region a bit occupies. \ours instead spreads each bit as a dense pseudo-random codeword over the whole image and recovers it by matched-filtering a learned cover-suppressed chip representation, with a parallel convolutional decoding path and sparsification-aware training. A conditional chip-space analysis shows that, under a codeword-independent perturbation model, dense spreading increases the budget required to disrupt matched-filter recovery. Evaluated on COCO and DIV2K against nine schemes, \ours is the only evaluated method retaining high detection under both the regeneration and the latent-space sparsification settings we test, with competitive JPEG and additive-noise robustness. It keeps the embedded watermark imperceptible, maintaining high perceptual quality on both COCO and DIV2K.
\end{abstract}

\section{Introduction}

The deployment of generative models for photorealistic image synthesis has made invisible watermarking a frontline tool for provenance and deepfake mitigation \cite{fernandez2023stable, yang2024gaussian, wen2023treering, bui2023trustmark}. The dominant paradigm is the encoder--decoder watermarking scheme, in which a learned encoder embeds a message $b$ into a cover image $x$ to produce a visually identical $x_w = E(x,b)$, and a learned decoder recovers $b$ from $x_w$ or a distorted derivative. Post-hoc schemes~\cite{zhu2018hidden, jia2021mbrs, ma2022cin, wu2023sepmark, bui2023trustmark, zhang2019rivagan, sander2025wam} all follow this template, and many generation-time schemes~\cite{yang2024gaussian, ci2024ringid, lu2024vine} fold the same encoder--decoder design into the latent space of a generative model.

For provenance, a watermark must survive not only incidental distortions such as JPEG compression, noise, blur, and cropping, but also deliberate erasure, now both practical and varied: adversarial perturbations optimize against the decoder, regeneration attacks rewrite the image through a generative model~\cite{an2024waves, ni2025diffusionedit}, and latent-space sparsification suppresses the watermark in a query-free, black-box manner~\cite{songdemark2026}. A watermark offered as evidence of origin is only as trustworthy as its weakest point under this combined threat.

One factor is shared across these attacks: concentrated residual embedding. Trained only to be imperceptible and decodable, the encoder is free to place each bit in a small part of the image, and we measure that it does (Figure~\ref{fig:prelim}). Such a footprint is a plausible weak point for every removal attack we consider: a distortion corrupts the local residual, an adversarial step suppresses the decoder-sensitive features, a regeneration pass rewrites the non-semantic trace, and a latent-space sparsification removes the compact subspace, each by disturbing only the small region the bit occupies. We take this as a motivating structural hypothesis, not a demonstrated cause; it suggests distributing each bit as widely across the image as the imperceptibility budget allows.

\begin{figure*}[t]
\centering
\includegraphics[width=\linewidth]{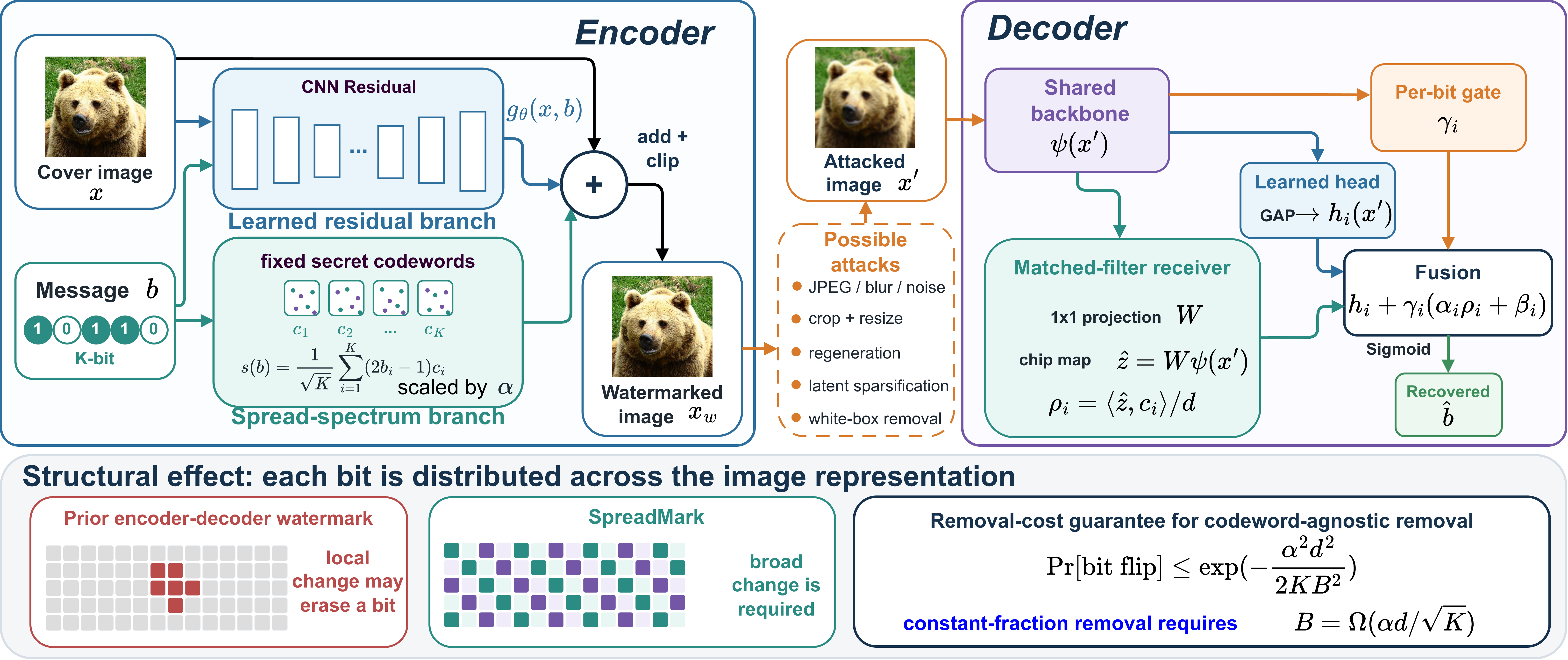}
\caption{Overview of \ours. The encoder adds to cover image $x$ a learned CNN residual $g_\theta(x,b)$ and a spread-spectrum signal $s(b)$ superposing one secret codeword per bit. From a possibly attacked $x'$, the decoder fuses a learned head $h_i$ with a per-bit-gated matched-filter statistic $\rho_i = \langle\hat z,c_i\rangle/d$ on a cover-suppressed chip $\hat z$. Bottom: under the codeword-independent perturbation model of Theorem~\ref{thm:bound}, broad spreading raises the removal budget to $B = \Omega(\alpha d/\sqrt K)$.}
\label{fig:arch}
\end{figure*}

Spreading a message over many dimensions is a classical idea, not a new one: spread-spectrum communication distributes a low-rate message across a wide band through pseudo-random codes and recovers it by correlation~\citep{proakis2007digital, viterbi1995cdma}, and classical image watermarking applied the same principle over transform-domain coefficients~\citep{cox1997spread}. \ours brings this principle into a neural post-hoc watermarking architecture: each bit is assigned a fixed pseudo-random codeword spanning the image, the message is embedded as the superposition of the selected codewords on top of a learned residual, and the decoder correlates a learned cover-suppressed chip representation against the same codeword bank, retaining a convolutional path as a fallback where a fixed spatial code desynchronizes. A sparsification-aware training stage further hardens the learned decoder against low-rank feature suppression.

We analyze the resulting receiver in chip space. When the perturbation an attack induces there is statistically independent of the secret codebook, the attacker cannot align it with any codeword, and the budget needed to disrupt matched-filter recovery grows with how widely each bit is spread. Empirically, this dense-spreading design yields strong robustness under regeneration and latent-space sparsification and remains comparable to the best baselines under JPEG compression and additive noise.

In summary, we make four contributions:

\begin{itemize}[leftmargin=*, topsep=2pt, itemsep=0pt, parsep=0pt]
\item We measure the image-space per-bit footprint of nine schemes and find it uniformly small, and we advance this as a motivating hypothesis for their fragility rather than a demonstrated cause of it.
\item We propose \ours, combining pseudo-random spread-spectrum injection, sparsification-aware training, and a learned cover-suppressed chip decoded by matched-filter and convolutional paths.
\item We give a conditional chip-space guarantee: under a codeword-independent perturbation model, dense spreading increases the budget an attacker needs to disrupt recovery by the matched filter.
\item We evaluate \ours on COCO and DIV2K against nine schemes under four attack families, trace its behavior through ablations, and state plainly where it fails.
\end{itemize}

\section{Related Work}

\noindent
\textbf{Image Watermarking.}
Image watermarking has developed along two directions. HiDDeN~\citep{zhu2018hidden} established the post-hoc encoder--decoder template, embedding a message and recovering it after a differentiable noise layer; later schemes extend it with mixed real and simulated JPEG~\citep{jia2021mbrs}, attention-based encoding~\citep{zhang2019rivagan}, invertible networks~\citep{ma2022cin}, separable robust and fragile components~\citep{wu2023sepmark}, universal pretrained systems~\citep{bui2023trustmark, sander2025wam}, edit localization~\citep{zhang2024editguard}, provable robustness~\citep{xian2024raw}, higher capacity~\citep{evennou2025latentseal}, and decoupled defenses~\citep{chen2026advmark}. Generation-time schemes instead embed during synthesis, by fine-tuning the latent decoder~\citep{fernandez2023stable}, structuring the initial or sampling noise~\citep{wen2023treering, yang2024gaussian, ci2024ringid}, injecting marks into latents~\citep{zhang2024zodiac}, learning a diffusion prior~\citep{lu2024vine}, or binding the mark to semantics~\citep{arabi2025seal, zhang2026sembind}, the last being vulnerable to coherence-preserving removal~\citep{gao2026llmcoherence}. \ours belongs to the first direction: it watermarks an existing image, independent of how it was produced, so we compare against post-hoc schemes and treat generation-time watermarking as a complementary line that requires control of the synthesis process, and hence a different deployment model from the post-hoc setting we study.

\smallskip
\noindent
\textbf{Watermarking Attacks.}
We consider four families: standard distortions; white-box adversarial removal, which optimizes a perturbation against the decoder by projected gradient descent; regeneration, which rewrites the image through a generative model while preserving content and can substantially weaken a watermark in a single pass~\citep{an2024waves, ni2025diffusionedit}; and latent-space sparsification, which suppresses it through low-rank feature projections in a query-free, black-box setting~\citep{songdemark2026}. None of the nine schemes we evaluate withstands all four families.

\smallskip
\noindent
\textbf{Spread-Spectrum Watermarking.}
Spreading a message across many carriers is a classical principle, not a contribution of this work. \citet{cox1997spread} brought spread spectrum~\citep{proakis2007digital, viterbi1995cdma} to multimedia, embedding an i.i.d.\ Gaussian carrier in perceptually significant DCT coefficients and detecting it by normalized correlation; improved spread spectrum cancels host interference at the correlator~\citep{malvar2003improved}, and the construction was carried into video~\citep{hartung1998watermarking}. Learned schemes have also distributed the mark spatially: \citet{plata2020robust} train a CNN encoder to spread the message over the spatial domain and lower the local bit density, \citet{fernandez2022watermarking} modulate a message into the signs of projections onto random orthogonal carriers in a frozen self-supervised feature space, and dispersed watermarking replicates the message across randomly placed blocks and re-synchronizes them at extraction~\citep{guo2023practical}. \ours is therefore not the first scheme to distribute watermark information spatially. Its novelty is the combination: an explicit pseudo-random codeword per bit injected in the image domain alongside a learned residual, a learned cover-suppressed chip representation on which the matched filter operates, parallel matched-filter and convolutional decoding paths under a per-bit gate, sparsification-aware training, and a systematic evaluation against regeneration and latent-space removal. Relative to the classical transform-domain constructions the carrier is dense in the pixel basis and the correlator input is learned rather than fixed; relative to prior neural spatial-spreading schemes the code is explicit and shared by encoder and decoder rather than implicit in the encoder, which is what makes the matched-filter path separately ablatable and analyzable, unlike an implicit code.

\section{Preliminary Analysis}
\label{sec:prelim}

\begin{figure}[t]
\centering
\includegraphics[width=\columnwidth]{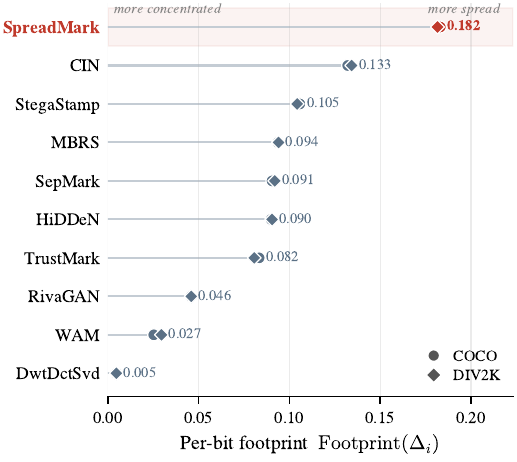}
\caption{Image-space per-bit footprint on COCO and DIV2K. For each scheme we flip one bit, forming $\Delta_i$ (Eq.~\ref{eq:delta}), and compute $\mathrm{Footprint}(\Delta_i)$ (Eq.~\ref{eq:footprint}), the effective fraction of image-space coordinates it occupies.}
\label{fig:prelim}
\end{figure}

We first measure, in image space on the watermarked output, how widely existing watermarks distribute a single message bit. We study the nine schemes listed in Table~\ref{tab:main}, using 300 watermarked images per dataset from COCO and DIV2K. For a fixed image and watermark, we flip the $i$-th bit and compute the resulting change vector:
\begin{equation}
\label{eq:delta}
\Delta_i =
\mathrm{vec} \left(
u(x,b^{(i=1)}) - u(x,b^{(i=0)})
\right)
\in \mathbb{R}^{d},
\end{equation}
where $u(\cdot)$ is the watermarked output image and $\mathrm{vec}(\cdot)$ flattens it. The per-bit footprint is the normalized participation ratio of this change vector:
\begin{equation}
\label{eq:footprint}
\mathrm{Footprint}(\Delta_i)
=
\frac{1}{d}
\cdot
\frac{
\left(\sum_{j=1}^{d} \Delta_{i,j}^{2}\right)^2
}{
\sum_{j=1}^{d} \Delta_{i,j}^{4}
}.
\end{equation}
This estimates the effective fraction of image-space coordinates carrying the bit-induced change: it approaches 1 for a uniform change and is approximately $m/d$ when the same energy is concentrated in $m$ coordinates, so it measures how far a bit is spread independently of how strongly it is embedded.

Figure~\ref{fig:prelim} shows that prior schemes have consistently small image-space per-bit footprints and that \ours yields a substantially larger one. Two qualifications apply. The measurement is correlational: a larger footprint is consistent with, but does not establish, greater removal robustness, and these schemes differ in architecture, payload, and embedding strength as well as in footprint. It is also specific to image space; an image-space-dense watermark may still be concentrated once an encoder maps it into a learned latent, which is where latent-space attacks act. We therefore read Figure~\ref{fig:prelim} as motivating the design principle---spread each bit as widely as the imperceptibility budget allows---and test it empirically.

\section{Method}
\label{sec:method}

Figure~\ref{fig:arch} gives an overview of \ours. The encoder superposes an explicit spread-spectrum signal on a standard additive residual, spreading each message bit across the whole image as a fixed pseudo-random codeword. The decoder recovers it by matched-filtering a cover-suppressed chip representation against the shared codeword bank, backed by a convolutional head and a per-bit gate.

\subsection{Threat Model}
An encoder $E:(x,b)\mapsto x_w$ embeds a $K$-bit message $b$ into a cover image $x$, and a decoder $D:x'\mapsto\hat b$ estimates it from any, possibly attacked, image $x'$. The defender wants $x_w$ visually close to $x$ and $\hat b$ accurate under attack; the attacker receives $x_w$ and outputs $x_a$, reducing message recovery while preserving visual quality. Our four attack types are standard distortions and regeneration, both oblivious to the watermark; latent-space sparsification, query-free and black-box, with no access to the codewords or decoder; and adversarial removal, white-box, with full gradient access to the decoder and hence the codewords. The codebook is assumed secret throughout and is never rotated, an assumption whose consequences the Discussion sets out in full.

\subsection{Spread-Spectrum Embedding}
Prior schemes place most of each bit's energy in a low-dimensional subspace of the $d = HW$ image-space directions (Figure~\ref{fig:prelim}). In contrast, \ours adopts the additive-residual design of HiDDeN \cite{zhu2018hidden}, $x_w = \Pi\big(x + g_\theta(x,b)\big)$, with $g_\theta$ a small CNN and $\Pi$ a clip to the valid range, and augments it with a parallel spread-spectrum injection. We pre-generate $K$ fixed pseudo-random codewords $\{c_i\}_{i=1}^{K}$, each a single-channel map $c_i \in \mathbb{R}^{H\times W}$ with zero mean and unit per-pixel variance ($\tfrac1d\|c_i\|_2^2 = 1$); being i.i.d.\ Gaussian, distinct codewords are near-orthogonal, $\langle c_i,c_j\rangle/d \approx 0$ for $i \neq j$. The encoder embeds the message as:
\begin{equation}
\label{eq:embed}
x_w \;=\; \Pi \Big(\, x \;+\; g_\theta(x,b)
  \;+\; \alpha\,\underbrace{\tfrac{1}{\sqrt{K}}\textstyle\sum_{i=1}^{K}(2b_i - 1)\,c_i}_{s(b)} \,\Big),
\end{equation}
with $s(b)$ broadcast across RGB channels. The $1/\sqrt{K}$ factor keeps $s(b)$ at unit per-pixel variance independent of payload, so the learned gain $\alpha$, a softplus of a free parameter initialized small, is the per-pixel amplitude of the spread component and sets the bit-recovery and visual-quality trade-off.

The decoder $D$ couples a matched-filter receiver with a learned convolutional head. Given a possibly attacked image $x'$, a fully-convolutional backbone yields features $\psi(x')$; a $1 \times 1$ projection $W$ maps them to a single-channel, cover-suppressed ``chip'' map $\hat z = W\psi(x')$, correlated against the same codeword bank to form the matched-filter statistic $\rho_i = \langle \hat z, c_i\rangle / d$. The unit-variance normalization makes $\langle c_i,c_i\rangle/d = 1$, so an ideal chip yields $\rho_i \approx \alpha(2b_i - 1)/\sqrt{K}$, the sign of bit $i$. A pooled head $h(x')=\mathrm{Lin}_h(\mathrm{GAP}(\psi(x')))$, with $\mathrm{GAP}$ global average pooling \citep{lin2014nin} and $\mathrm{Lin}$ a linear map to the $K$ bits \citep{goodfellow2016deep}, supplies a parallel convolutional decoder, and a per-bit gate $\gamma_i=\sigma(\mathrm{Lin}_\gamma(\mathrm{GAP}(\psi(x'))))_i$ weights the matched-filter term:
\begin{equation}
\label{eq:decode}
\hat b_i \;=\; \sigma \big(\,h_i(x') + \gamma_i\,(a_i\,\rho_i + \beta_i)\,\big),
\qquad i=1,\dots,K,
\end{equation}
with learnable per-bit gain $a_i$ and bias $\beta_i$. The $\rho_i$ term is a correlation receiver against the dense code and dominates recovery where the fixed spatial code is preserved, as under JPEG compression and additive noise; the head $h$, supervised by an auxiliary bit loss so that it remains a competent decoder on its own, supplies robustness under geometric desynchronization and is the more resilient path under white-box adversarial removal, which targets the correlation receiver directly. The crucial property of (\ref{eq:embed})--(\ref{eq:decode}) is that each bit's contribution is dense in the pixel basis, with non-zero projection on all $d$ image-space directions, so a perturbation not aligned with the code must disturb a broad fraction of the image in order to erase it, as the next subsection makes precise.

\subsection{Sparsification-Aware Adversarial Training}

Of the four attack types, only standard distortions are addressed by training augmentation, which with probability $p_d$ replaces $x_w$ by one operation drawn from JPEG compression, Gaussian blur, additive noise, brightness, or crop-resize. Regeneration and white-box removal are never simulated. A sparsification-aware adversarial training (SA-AT) stage does simulate the sparsification attack; empirically it sharpens regeneration and JPEG robustness rather than sparsification, which the model attains without it (Table~\ref{tab:saat}).

The attack drives the image's deep features into a rank-$r$ subspace while staying close to $x_w$:
\begin{equation}
\label{eq:attack}
x_a \;=\; \arg\min_{\|\delta\|_\infty\le\varepsilon}
  \big\| (I - U_rU_r^\top)\,f(x_w + \delta) \big\|_2^2,
\end{equation}
where $f$ is a frozen feature extractor and $U_r$ collects the top-$r$ singular vectors of its activations on a clean batch; this is the operator used by latent-space sparsification attacks~\cite{songdemark2026}. With probability $p_s$ we simulate it during training: sample $r \sim \mathcal{U}\{r_{\min},r_{\max}\}$, take the rank-$r$ basis $U_r$, and run $T$ pixel-space gradient steps of (\ref{eq:attack}), recomputing the decomposition only every 200 steps. The surrogate used at evaluation time is built independently of the training one, so robustness does not come from matching a specific attacker.

Writing $\ell(z)=\mathrm{BCE}(b,z)$ for the bit loss against the true message and $\mathcal{A}$ for the composition of the two perturbations, each applied independently with its own probability and $\mathcal{A}=\mathrm{id}$ if neither fires, we train \ours with:
\begin{align}
\mathcal{L} = &\;\lambda_a\,\ell(D(\mathcal{A}(x_w))) + \lambda_b\,\ell(D(x_w)) + \lambda_h\,\ell(h(\mathcal{A}(x_w))) \nonumber\\
& + \lambda_v\big[\|x-x_w\|^2 + \mathrm{LPIPS}(x,x_w)\big],
\end{align}
where $\lambda_a$ is the robustness objective on the perturbed image, $\lambda_b$ preserves clean accuracy, $\lambda_h$ supervises $h$, and $\lambda_v$ controls visual quality through a combined $L_2$ and LPIPS penalty.

\subsection{A Conditional Chip-Space Guarantee}

We now make the design intuition precise for an idealized receiver, and state what the model does and does not cover. The codewords act as a secret key shared by encoder and decoder. The decoder reads bit $i$ from $\rho_i=\langle\hat z,c_i\rangle/d$ on the chip $\hat z=\alpha\,s(b)+\nu$, where $s(b)=\tfrac{1}{\sqrt K}\sum_j(2b_j-1)c_j$ and the codewords obey $\tfrac1d\|c_i\|_2^2=1$, $\langle c_i,c_j\rangle/d\approx 0$. The term $\nu$ collects the cover-suppression residual and cross-code interference; we idealize it as carrying no systematic component along any secret codeword, so the clean statistic concentrates at margin $|\mathbb{E}\,\rho_i|=\alpha/\sqrt K$ with sign $2b_i-1$.

\begin{theorem}[Chip-space budget under a codeword-independent perturbation]
\label{thm:bound}
Let an attack map the chip $\hat z\mapsto\hat z+\Delta$, and assume the induced perturbation $\Delta$ is statistically independent of the secret codebook $\{c_i\}$ and satisfies $\|\Delta\|_2\le B$. Then the induced shift $\Delta\rho_i=\langle\Delta,c_i\rangle/d$ satisfies $\mathbb{E}[\Delta\rho_i]=0$ and $\mathrm{Var}[\Delta\rho_i]\le B^2/d^2$, and
\begin{equation*}
\Pr \big[\widehat b_i\neq b_i\big]\;\le\;\exp \Big(-\tfrac{\alpha^2 d^2}{2KB^2}\Big).
\end{equation*}
Consequently, any such attack whose expected number of flipped bits reaches a constant fraction $\rho_0$ of the $K$ bits must use $B\ge\alpha d/\sqrt{2K\ln(1/\rho_0)}=\Omega(\alpha d/\sqrt K)$; in pixel space, with an $L$-Lipschitz chip map and $\|\delta\|_\infty\le\varepsilon$, this reads $\varepsilon=\Omega(\alpha\sqrt d/(L\sqrt K))$.
\end{theorem}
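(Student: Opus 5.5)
We condition on $\Delta$ and use the i.i.d.\ Gaussian structure of the codebook; independence of $\Delta$ from $\{c_i\}$ is what makes this conditioning legitimate. Fix $\Delta$ with $\|\Delta\|_2\le B$. Then $\langle\Delta,c_i\rangle=\sum_j \Delta_j c_{i,j}$ is a linear combination of independent zero-mean, unit-variance Gaussians. Conditionally on $\Delta$ it is therefore exactly $\mathcal N(0,\|\Delta\|_2^2)$, so $\Delta\rho_i\mid\Delta\sim\mathcal N(0,\|\Delta\|_2^2/d^2)$. Taking expectations over $\Delta$ gives $\mathbb{E}[\Delta\rho_i]=0$ by the tower rule and $\mathrm{Var}[\Delta\rho_i]=\mathbb{E}\|\Delta\|_2^2/d^2\le B^2/d^2$. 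If the codewords are only sub-Gaussian with unit variance, the same steps go through with a sub-Gaussian variance proxy instead of exact Gaussianity.

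For the error bound we work in the idealized model stated before the theorem. The clean statistic has mean $\alpha(2b_i-1)/\sqrt K$, and $\nu$ is taken to contribute no systematic component. We treat $\nu$ as contributing nothing, or as absorbed into the margin. The decision errs only if $\Delta\rho_i$ has sign opposite to $2b_i-1$ and magnitude at least $\alpha/\sqrt K$. Conditioning on $\Delta$ and applying the Gaussian (Chernoff) tail $\Pr[\mathcal N(0,\sigma^2)\ge t]\le e^{-t^2/(2\sigma^2)}$ with $t=\alpha/\sqrt K$ and $\sigma^2\le B^2/d^2$ gives $\exp(-\alpha^2d^2/(2KB^2))$ uniformly in $\Delta$. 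Averaging over $\Delta$ preserves the bound.

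For the consequence, linearity of expectation gives $\mathbb{E}[\#\text{flips}]=\sum_i\Pr[\hat b_i\neq b_i]\le K\exp(-\alpha^2d^2/(2KB^2))$. Requiring this to be at least $\rho_0K$ and solving for $B$ yields $B\ge\alpha d/\sqrt{2K\ln(1/\rho_0)}$. For the pixel-space statement, an $L$-Lipschitz chip map in $\ell_2$ gives $\|\Delta\|_2\le L\|\delta\|_2\le L\sqrt{d'}\,\varepsilon$, where $d'$ is the pixel dimension ($3d$ for RGB, a constant factor). Substituting this into the bound gives $\varepsilon\ge\alpha\sqrt d/(L\sqrt{3}\sqrt{2K\ln(1/\rho_0)})=\Omega(\alpha\sqrt d/(L\sqrt K))$.

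The main obstacle is the modelling step that isolates $\Delta\rho_i$ as the only source of error. This requires $\nu$ to have no component along $c_i$. The cross-code interference $\langle c_j,c_i\rangle/d$, $j\neq i$, is only approximately zero, and it fluctuates at scale $\alpha/\sqrt d$ per term. I would make this explicit as part of the idealization, so that $\rho_i^{\text{clean}}=\alpha(2b_i-1)/\sqrt K$ exactly. Alternatively, one could keep a margin $(1-\eta)\alpha/\sqrt K$, since interference is $O(\alpha\sqrt{\ln K/d})$, which is negligible for $d\gg K$; this changes only the constants. A second subtlety is that independence must hold jointly for $\Delta$ and all codewords, not merely $\Delta$ and $c_i$, for the conditioning to be valid. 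That joint independence is precisely the stated hypothesis, and it excludes the white-box attacker.
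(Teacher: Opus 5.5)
Your proposal is correct and follows the paper's argument. You use independence to fix (condition on) $\Delta$, compute the mean and variance of the linear form $\frac1d\sum_j\Delta_j c_{i,j}$, apply a Gaussian or sub-Gaussian tail at the margin $\alpha/\sqrt K$, and finish with the first-moment count over the $K$ bits and the Lipschitz conversion to pixel space. Your bookkeeping is slightly tighter in two places, though both affect only constants: you use the one-sided tail, whereas the paper writes $\Pr[|\Delta\rho_i|\ge\alpha/\sqrt K]\le\exp(\cdot)$ and so drops the factor $2$ of a two-sided bound (its final one-sided claim still holds); and you use the RGB pixel dimension in $\|\delta\|_2\le\sqrt{3d}\,\varepsilon$.
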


\begin{proof}[Proof Sketch]
Write $\Delta\rho_i=\tfrac1d\sum_j\Delta_j c_{i,j}$ with $c_{i,j}$ zero-mean, unit-variance, independent of $\Delta$. Then $\mathbb{E}[\Delta\rho_i]=0$, $\mathrm{Var}[\Delta\rho_i]\le B^2/d^2$, and $\Delta\rho_i$ is sub-Gaussian with this variance proxy~\citep{vershynin2018high}. A bit flips only if $\Delta\rho_i$ cancels the margin $\alpha/\sqrt K$, with probability at most $\exp(-\alpha^2 d^2/(2KB^2))$. Summing over $i$ gives the expected flip count, and requiring it to reach $\rho_0K$ yields the stated budget; $\|\Delta\|_2\le L\sqrt d\,\varepsilon$ converts chip budget to pixel budget. The Appendix gives the full argument.
\end{proof}

\noindent
\textbf{Scope.} The theorem is conditional and idealized, not a robustness proof for the deployed system: dense spreading raises the budget required to disrupt matched-filter recovery, and that budget falls by roughly $\sqrt\phi$ when a bit occupies only a fraction $\phi$ of the coordinates. Three restrictions matter. The independence assumption does not hold automatically against an attack that observes $x_w$, since $x_w$ itself depends on the codewords. The bound is a first-moment statement, not a high-probability one. And it covers neither adaptive attackers who estimate the code, hold the key, or optimize against the full decoder, nor regeneration, compression, denoising, or sparsification, which we treat empirically below.

\section{Experiments}
\label{sec:exp}

\subsection{Experiment Setup}
\label{subsec:experiment_setup}

\noindent
\textbf{Datasets.} We train on 4,500 images from the COCO~\citep{lin2014microsoft} training split and evaluate on 500 held-out COCO images and 500 images from the DIV2K~\citep{agustsson2017div2k} validation set, none seen in training.

\smallskip
\noindent
\textbf{Baselines.} We compare \ours against nine schemes, all run from their public implementations, in two groups. HiDDeN~\cite{zhu2018hidden}, MBRS~\cite{jia2021mbrs}, StegaStamp~\cite{tancik2019stegastamp}, and SepMark~\cite{wu2023sepmark} are re-trained in our pipeline under conditions matched to \ours: same training images, $128 \times 128$ resolution, $K = 30$ bits, optimizer, schedule, and distortion augmentation. CIN~\cite{ma2022cin}, TrustMark~\cite{bui2023trustmark}, RivaGAN~\cite{zhang2019rivagan}, WAM~\cite{sander2025wam}, and DwtDctSvd~\cite{barni1998dctwm} run from the authors' released checkpoints, which we cannot re-train under our protocol, at their native resolution (128 for CIN, 256 for the rest) and native payload (30 to 64 bits). Each scheme receives images bilinearly resized to its native resolution and every attack is applied there, so a spatial attack such as an 80\% crop is geometrically equivalent across schemes. The two groups sit at very different quality operating points (Table~\ref{tab:quality}), so robustness and imperceptibility must be read jointly rather than as two independent rankings of the same schemes.

\smallskip
\noindent
\textbf{Attacks.} \textit{Standard distortions:} JPEG at quality 75, additive Gaussian noise of standard deviation 0.05 on the $[-1,1]$ pixel scale, Gaussian blur of kernel standard deviation 2.0, and an 80\% center-crop followed by resizing. \textit{Adversarial removal:} a white-box projected-gradient attack on the decoder, 30 steps at step size 0.02, budget $\varepsilon = 0.10$, with an $\varepsilon \in [0.02,0.10]$ sweep in the Appendix (Figure~\ref{fig:eps}). \textit{Regeneration:} WAVES~\citep{an2024waves} img2img with Stable Diffusion v1.4~\citep{rombach2022sd} at noise strength 0.25, run at $512 \times 512$ and resized back to each scheme's native resolution. \textit{Latent-space sparsification:} a query-free black-box attack~\cite{songdemark2026} that projects the image toward a low-rank subspace within an $L_\infty$ constraint. Every attack is applied identically to all ten schemes at native resolution.

\smallskip
\noindent
\textbf{Metrics.} Our primary metric is the detection true-positive rate at 1\% false-positive rate, abbreviated \DetectAcc; it is a detection rate, not a classification accuracy. An image is declared watermarked when its matching-bit count reaches $\tau_K$, the smallest integer with $\Pr[\mathrm{Binom}(K,\tfrac12) \ge \tau_K]\le0.01$. This threshold is analytic under the unwatermarked null, so it is fixed by $K$ alone, consumes no calibration images, and is comparable across payloads. We also report bit accuracy (\BitAcc); since detection thresholds the same match count, the two need not move together. For visual quality we report PSNR, SSIM, and LPIPS~\cite{zhang2018lpips} between the watermarked and cover images, averaged over the test set.

\smallskip
\noindent
\textbf{Training protocol.} We train \ours at $128 \times 128$ with $K = 30$ bits, using Adam at learning rate $10^{-3}$, batch size 24, and 60 epochs. Image-quality losses are disabled for the first 3 epochs and ramped in over the next 6; distortion augmentation ($p_d = 0.6$) starts at epoch 8. SA-AT fires with probability $p_s = 0.5$, sampling rank $r \in [4,32]$ and running $T = 3$ inner steps at budget 0.05, the rank-$r$ basis refreshed every 200 steps. Every number comes from a single training run per configuration, so we report no seed variance. All models are evaluated on one NVIDIA RTX~6000 Ada GPU.

\subsection{Main Results}

\begin{table*}[t]
\centering
\resizebox{\textwidth}{!}{%
\begin{tabular}{ll cccccccccccccccc}
\toprule
\multirow{2}{*}{Dataset} & \multirow{2}{*}{Method} & \multicolumn{2}{c}{Clean} & \multicolumn{2}{c}{Sparsif.} & \multicolumn{2}{c}{Regen.} & \multicolumn{2}{c}{JPEG} & \multicolumn{2}{c}{Noise} & \multicolumn{2}{c}{Blur} & \multicolumn{2}{c}{Crop} & \multicolumn{2}{c}{Adv.} \\
\cmidrule(lr){3-4}\cmidrule(lr){5-6}\cmidrule(lr){7-8}\cmidrule(lr){9-10}\cmidrule(lr){11-12}\cmidrule(lr){13-14}\cmidrule(lr){15-16}\cmidrule(lr){17-18}
 & & BA$\uparrow$ & DA$\uparrow$ & BA$\uparrow$ & DA$\uparrow$ & BA$\uparrow$ & DA$\uparrow$ & BA$\uparrow$ & DA$\uparrow$ & BA$\uparrow$ & DA$\uparrow$ & BA$\uparrow$ & DA$\uparrow$ & BA$\uparrow$ & DA$\uparrow$ & BA$\uparrow$ & DA$\uparrow$ \\
\midrule
\multirow{10}{*}{COCO}
 & HiDDeN & 0.83 & 0.98 & 0.82 & 0.95 & 0.69 & 0.40 & 0.67 & 0.33 & 0.82 & 0.96 & 0.56 & 0.06 & 0.78 & 0.83 & 0.49 & 0.01 \\
 & MBRS & 0.82 & 0.94 & 0.80 & 0.94 & 0.69 & 0.33 & 0.70 & 0.40 & 0.81 & 0.94 & 0.57 & 0.04 & 0.78 & 0.84 & 0.48 & 0.01 \\
 & StegaStamp & 0.74 & 0.64 & 0.75 & 0.66 & 0.74 & 0.61 & 0.74 & 0.65 & 0.74 & 0.67 & 0.67 & 0.23 & 0.74 & 0.65 & 0.52 & 0.02 \\
 & SepMark & 0.84 & 0.99 & 0.81 & 0.94 & 0.66 & 0.22 & 0.63 & 0.15 & 0.83 & 0.99 & 0.53 & 0.01 & 0.74 & 0.61 & 0.48 & 0.01 \\
 & CIN & 0.77 & 0.63 & 0.71 & 0.45 & 0.52 & 0.02 & 0.52 & 0.00 & 0.68 & 0.37 & 0.49 & 0.01 & 0.51 & 0.01 & 0.51 & 0.00 \\
 & TrustMark & 1.00 & 1.00 & 0.66 & 0.33 & 0.49 & 0.00 & 0.97 & 0.94 & 1.00 & 1.00 & 0.98 & 0.96 & 0.99 & 0.99 & 0.50 & 0.01 \\
 & RivaGAN & 0.99 & 1.00 & 0.96 & 0.99 & 0.67 & 0.32 & 0.98 & 1.00 & 0.99 & 1.00 & 0.99 & 1.00 & 0.96 & 0.98 & 0.41 & 0.00 \\
 & WAM & 0.89 & 0.86 & 0.65 & 0.20 & 0.50 & 0.00 & 0.55 & 0.03 & 0.83 & 0.82 & 0.83 & 0.82 & 0.52 & 0.00 & 0.16 & 0.00 \\
 & DwtDctSvd & 1.00 & 1.00 & 0.53 & 0.14 & 0.53 & 0.07 & 0.91 & 1.00 & 0.99 & 1.00 & 0.87 & 1.00 & 0.53 & 0.02 & -- & -- \\
 \rowcolor{rowgray} & \textbf{\ours} & 1.00 & 1.00 & 1.00 & 1.00 & 0.78 & 0.77 & 0.93 & 1.00 & 1.00 & 1.00 & 0.59 & 0.12 & 0.66 & 0.24 & 0.32 & 0.00 \\
\midrule
\multirow{10}{*}{DIV2K}
 & HiDDeN & 0.83 & 0.98 & 0.82 & 0.94 & 0.68 & 0.33 & 0.66 & 0.32 & 0.84 & 0.98 & 0.56 & 0.00 & 0.78 & 0.78 & 0.51 & 0.02 \\
 & MBRS & 0.81 & 0.96 & 0.79 & 0.87 & 0.69 & 0.40 & 0.70 & 0.45 & 0.81 & 0.91 & 0.58 & 0.03 & 0.77 & 0.76 & 0.51 & 0.03 \\
 & StegaStamp & 0.74 & 0.66 & 0.75 & 0.65 & 0.74 & 0.66 & 0.73 & 0.58 & 0.75 & 0.69 & 0.66 & 0.21 & 0.74 & 0.62 & 0.53 & 0.01 \\
 & SepMark & 0.83 & 0.99 & 0.82 & 0.97 & 0.67 & 0.30 & 0.61 & 0.09 & 0.84 & 0.99 & 0.54 & 0.06 & 0.75 & 0.70 & 0.50 & 0.00 \\
 & CIN & 0.74 & 0.64 & 0.69 & 0.42 & 0.51 & 0.01 & 0.50 & 0.00 & 0.69 & 0.39 & 0.52 & 0.01 & 0.49 & 0.02 & 0.50 & 0.01 \\
 & TrustMark & 1.00 & 1.00 & 0.61 & 0.22 & 0.51 & 0.01 & 0.92 & 0.84 & 0.98 & 0.97 & 0.93 & 0.87 & 0.99 & 0.99 & 0.50 & 0.02 \\
 & RivaGAN & 0.99 & 0.99 & 0.94 & 0.99 & 0.66 & 0.24 & 0.96 & 0.97 & 0.98 & 0.99 & 0.97 & 0.98 & 0.93 & 0.93 & 0.43 & 0.00 \\
 & WAM & 0.88 & 0.84 & 0.66 & 0.23 & 0.51 & 0.01 & 0.57 & 0.07 & 0.84 & 0.83 & 0.83 & 0.81 & 0.51 & 0.00 & 0.15 & 0.00 \\
 & DwtDctSvd & 1.00 & 1.00 & 0.54 & 0.11 & 0.54 & 0.09 & 0.91 & 0.99 & 0.99 & 0.99 & 0.86 & 0.97 & 0.52 & 0.04 & -- & -- \\
 \rowcolor{rowgray} & \textbf{\ours} & 1.00 & 1.00 & 1.00 & 1.00 & 0.73 & 0.60 & 0.89 & 1.00 & 1.00 & 1.00 & 0.58 & 0.05 & 0.67 & 0.31 & 0.33 & 0.00 \\
\bottomrule
\end{tabular}}
\caption{Robustness on COCO and DIV2K, reported as bit accuracy (\BitAcc) and detection TPR at 1\% FPR (\DetectAcc). Clean is the unattacked watermark, and the attack columns are latent-space sparsification (Sparsif.), regeneration (Regen.), JPEG, noise, blur, crop, and white-box adversarial removal (Adv.).}
\label{tab:main}
\end{table*}

\begin{table}[t]
\centering
\resizebox{\columnwidth}{!}{%
\begin{tabular}{lcccccc}
\toprule
 & \multicolumn{3}{c}{COCO} & \multicolumn{3}{c}{DIV2K} \\
\cmidrule(lr){2-4}\cmidrule(lr){5-7}
 & PSNR$\uparrow$ & SSIM$\uparrow$ & LPIPS$\downarrow$ & PSNR$\uparrow$ & SSIM$\uparrow$ & LPIPS$\downarrow$ \\
\midrule
HiDDeN & 24.1 & 0.641 & 0.102 & 24.1 & 0.703 & 0.091 \\
MBRS & 24.1 & 0.636 & 0.103 & 24.1 & 0.698 & 0.087 \\
StegaStamp & 24.3 & 0.679 & 0.152 & 24.0 & 0.723 & 0.142 \\
SepMark & 24.3 & 0.647 & 0.108 & 24.2 & 0.705 & 0.093 \\
CIN & 41.5 & 0.984 & 0.001 & 41.1 & 0.986 & 0.001 \\
TrustMark & 41.2 & 0.985 & 0.001 & 40.4 & 0.987 & 0.001 \\
RivaGAN & 40.1 & 0.980 & 0.024 & 40.1 & 0.985 & 0.020 \\
WAM & 49.7 & 0.999 & 0.001 & 49.1 & 0.999 & 0.001 \\
DwtDctSvd & 36.9 & 0.972 & 0.012 & 36.0 & 0.973 & 0.012 \\
\midrule
\rowcolor{rowgray} \textbf{\ours} & 30.1 & 0.874 & 0.014 & 29.4 & 0.891 & 0.012 \\
\bottomrule
\end{tabular}}
\caption{Comparing \ours with the nine schemes in visual quality by PSNR, SSIM, and LPIPS.}
\label{tab:quality}
\vspace*{-2ex}
\end{table}

Tables~\ref{tab:main} and~\ref{tab:quality} report robustness and visual quality across the four attack families. We walk through them in turn.

Under standard distortions \ours is competitive on the photometric cases and weak on the spatial ones: roughly 0.9 \BitAcc under JPEG and 1.00 under additive noise on both datasets, with \DetectAcc 1.00 in each case, but only 0.59 \BitAcc and 0.12 \DetectAcc under Gaussian blur and 0.66 and 0.24 under an 80\% crop-and-resize, where a fixed spatial code is attenuated by low-pass filtering and desynchronized by resampling. TrustMark and RivaGAN, which operate at twice the resolution, keep a clear advantage here.

Regeneration is where \ours separates most clearly from the baselines. A single pass erases the faint, high-PSNR schemes: \BitAcc falls to 0.49 for TrustMark, 0.50 for WAM, 0.53 for DwtDctSvd, and 0.67 for RivaGAN, with \DetectAcc at or near zero, and no baseline exceeds 0.74 \BitAcc. \ours is the most robust evaluated scheme here, at \BitAcc 0.78 and \DetectAcc 0.77 on COCO (0.73 and 0.60 on DIV2K).

Under latent-space sparsification \ours holds \BitAcc and \DetectAcc at 1.00 on both datasets, and the margin does not erode as the attacker raises the projection rank. The most exposed are again the faint, high-PSNR watermarks of TrustMark, DwtDctSvd, and WAM, whose \DetectAcc falls to 0.33, 0.14, and 0.20, while the re-trained baselines and RivaGAN hold their clean detection level. Combining the two, \ours is the only evaluated method achieving high detection under both the selected regeneration and sparsification settings.

Adversarial removal is the weakest case. Under an $\varepsilon = 0.10$ white-box attack \ours reaches \BitAcc 0.32 and \DetectAcc 0 at the reported 30\,dB setting, comparable to the gradient-accessible WAM (0.16). Raising the embedding gain to 24\,dB recovers \BitAcc 0.86 and \DetectAcc 0.98, so robustness here is set by the embedding energy rather than the receiver; the Discussion returns to what this implies for deployment.

These results are consistent with the footprint hypothesis of Figure~\ref{fig:prelim}, though suggestive rather than conclusive, since \ours also differs from the baselines in decoder and training procedure. The clearest available control is embedding strength: \ours embeds a fainter watermark than the re-trained baselines (30 versus 24\,dB, Table~\ref{tab:quality}) yet holds regeneration \DetectAcc at 0.77 where they fall to 0.22--0.61, so the gap is not explained by embedding energy alone. Table~\ref{tab:quality} also shows how far apart the quality points are, from 24.1 to 49.7\,dB with \ours at 30.1\,dB: more imperceptible than the re-trained baselines and markedly less so than all five deployed checkpoints. The robustness columns should be read against that quality difference.

\begin{figure*}[t]
\centering
\includegraphics[width=\linewidth]{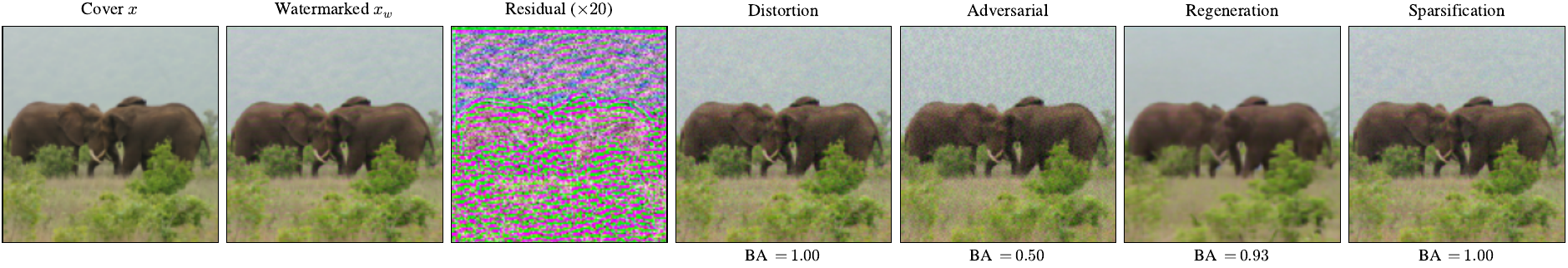}
\caption{Qualitative example for \ours: the cover image, the watermarked image $x_w$, the embedding residual scaled by $20\times$, and $x_w$ under the four attack families, each annotated with the \BitAcc recovered for this example.}
\label{fig:quality}
\end{figure*}

\subsection{Ablation Studies}

\begin{figure*}[t]
\centering
\includegraphics[width=\linewidth]{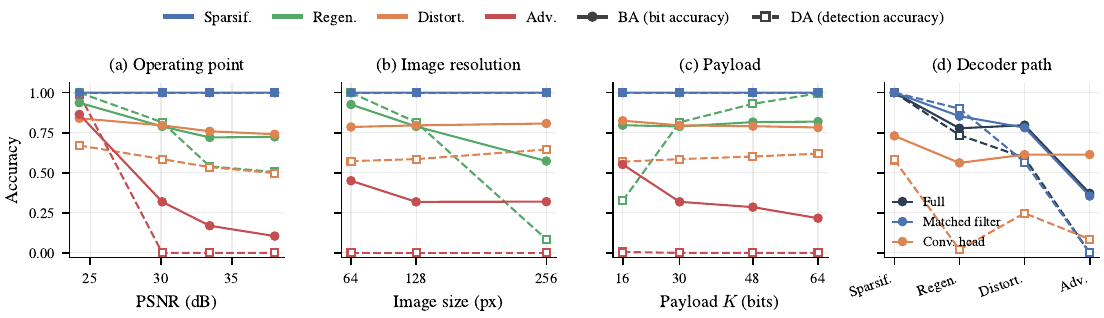}
\caption{Ablations for \ours on COCO under the four attack families. \textit{(a)~Operating Point:} embedding gain sweeping imperceptibility from 24 to 38\,dB PSNR. \textit{(b)~Image Resolution:} 64, 128, and 256 pixels. \textit{(c)~Watermark Payload:} message length from 16 to 64 bits. \textit{(d)~Decoder Path:} the full decoder versus the matched filter or the convolutional head alone.}
\label{fig:ablation}
\end{figure*}

We ablate \ours along four axes on COCO under the four attack families (Figure~\ref{fig:ablation}). These vary components of \ours and do not isolate the causal effect of spreading relative to the baselines, since \ours also changes the decoder and the training procedure at the same time.

\smallskip
\noindent
\textbf{Operating Point.}
The embedding gain trades imperceptibility for robustness. From 24 to 38\,dB, sparsification stays at 1.00 in \BitAcc and \DetectAcc, while the others weaken: regeneration \BitAcc falls from 0.94 to 0.72, distortion from 0.84 to 0.74, and adversarial removal from 0.86 to 0.10, with \DetectAcc following. Our 30\,dB setting is a deliberate bias toward imperceptibility, and is why adversarial removal is our weakest column.

\smallskip
\noindent
\textbf{Image Resolution.}
Our primary evaluation is at $128 \times 128$, and behavior is not uniform across scale. Across 64, 128, and 256 pixels, sparsification stays at 1.00 and distortion holds near 0.80 \BitAcc and 0.60 \DetectAcc, but regeneration degrades sharply: \BitAcc weakens from 0.93 to 0.57 and \DetectAcc collapses from 1.00 to near zero, as a fixed-length code spreads more thinly over a larger image. High-resolution regeneration is therefore a genuine limitation of the present design, and we do not claim this result transfers to full-resolution imagery.

\smallskip
\noindent
\textbf{Watermark Payload.}
Raising the payload from 16 to 64 bits leaves \BitAcc for sparsification, regeneration, and distortion essentially unchanged, while adversarial \BitAcc falls from 0.55 to 0.22, as each added bit divides a fixed energy budget. \DetectAcc mirrors this, except that a longer code sharpens the separation from the binomial null, raising regeneration \DetectAcc from 0.33 to 0.99 while its bit accuracy stays essentially flat.

\smallskip
\noindent
\textbf{Decoder Path.}
This ablation speaks most directly to the spreading mechanism, since the two paths share one trained model and differ only in which statistic is read out. The matched filter alone reproduces the full decoder under sparsification (1.00 \BitAcc), regeneration (0.85), and distortion (0.78), so the correlation receiver, not the convolutional head, supplies robustness in those three families. The head trails there but overtakes the matched filter under adversarial removal (0.61 versus 0.35 \BitAcc), which targets the correlation receiver directly, and is the fallback where the code desynchronizes. This localizes robustness within \ours but does not show that spreading is what distinguishes it from the baselines; the roles of embedding gain and SA-AT are given by the operating-point ablation and Table~\ref{tab:saat}.

\section{Discussion and Limitations}
\label{sec:disc}

\smallskip
\noindent
\textbf{Codebook Secrecy.}
\ours uses one secret codebook, fixed at training and reused for every image, and our threat model assumes it stays secret and is never statistically estimated. We do not test that assumption: we evaluate no known-message or chosen-message attack, no codeword estimation across images, no collusion, no watermark-copy attack, and no adaptive key recovery. A fixed dense code is a natural target for all of these; none of our results are evidence of security against them, and Theorem~\ref{thm:bound} does not apply, since its independence assumption is what such attacks violate. Dense spreading cuts both ways here: each marked image exposes more of the code than a concentrated residual would, so the property that resists oblivious removal may also ease code estimation. How many marked images such an estimator needs is the open question. Codebook rotation and multiple simultaneous keys are the most important directions for future work.

\smallskip
\noindent
\textbf{Adversarial Removal.}
The \DetectAcc of 0 that \ours records under white-box removal at 30\,dB is an energy mismatch, not a failure of the receiver: an $\varepsilon = 0.10$ perturbation can carry more energy than a 30\,dB watermark. The same decoder at the 24\,dB point recovers \BitAcc 0.86 and \DetectAcc 0.98, and Figure~\ref{fig:eps} shows it holding detection above 0.98 across the whole budget range while every other scheme, \ours at 30\,dB included, has lost detection by $\varepsilon = 0.06$. The attack therefore fixes where on the gain curve a deployment should sit, at a cost in visible quality, rather than bounding the design; it also presumes an attacker with the decoder gradients and hence the secret codewords, access a deployed detector does not expose~\citep{songdemark2026}.

\section{Conclusion}

\ours revisits spread-spectrum embedding in a neural post-hoc watermarking architecture, prompted by the measurement that existing encoder--decoder schemes concentrate each message bit in a small image-space footprint. It instead spreads every bit across the whole image as an explicit dense pseudo-random codeword, recovered by matched filtering with a convolutional fallback and sparsification-aware training. On COCO and DIV2K it uniquely retains high detection under both regeneration and latent-space sparsification, while remaining among the most imperceptible schemes. A conditional chip-space analysis explains why a dense code resists codeword-independent perturbations, though not the regeneration, compression, or code-estimating attacks we assess empirically; synchronization, scale, and codebook security remain future work.

\bibliography{refs}

\clearpage

\appendix

\section{Appendix}

\subsection{Implementation Details}
\label{app:impl}

Encoder and decoder share the HiDDeN backbone~\cite{zhu2018hidden} so that any robustness gain isolates to the spread-spectrum embedding and matched-filter receiver. The encoder applies four $3 \times 3$ convolutional blocks (Conv--BatchNorm--ReLU, 64 channels) to the cover image, concatenates the resulting features with the input image and the $K$ message bits broadcast as constant spatial maps, passes them through two further convolutional blocks, and maps the result to a three-channel residual bounded by a $\tanh$ of peak amplitude 0.15. When spreading is enabled the dense signal $\alpha\,s(b)$ of Eq.~(\ref{eq:embed}) is added to this residual, with $\alpha$ parameterized as the softplus of a free scalar initialized so that $\alpha = 0.06$. The decoder applies seven convolutional blocks (64 channels), global average pooling, and a linear head to $K$ bit logits; the spread path adds a $1 \times 1$ convolution to a single-channel chip map, the normalized correlation $\rho_i$ against the shared bank, a per-bit affine gain and bias, and a per-bit sigmoid gate whose bias is initialized to 2.0 so the matched filter is trusted at the start of training. The codewords are drawn once as i.i.d.\ Gaussian spatial maps with a fixed seed, normalized to unit per-pixel variance, and stored identically in the encoder and decoder.

\subsection{Proof of Theorem~\ref{thm:bound}}
\label{app:proof}

We restate the assumptions and give the argument in full. Theorem~\ref{thm:bound} is an idealized, conditional statement about the matched-filter path of the decoder; the assumptions below are modeling choices, and the Scope paragraph at the end of this section states what they exclude.

\smallskip
\noindent
\textbf{(A1) Codewords.} The codewords $\{c_i\}_{i=1}^{K}$, with $c_i\in\mathbb{R}^d$ and $d = HW$, have entries $c_{i,j}$ that are independent, zero-mean, and unit-variance after the per-pixel normalization $\tfrac1d\|c_i\|_2^2 = 1$, and are sub-Gaussian with variance proxy 1. This holds exactly for the Gaussian codewords we use and, by Hoeffding's lemma, for the symmetric $\pm1$ codewords examined in our codeword-design ablation (Table~\ref{tab:codeword}).

\smallskip
\noindent
\textbf{(A2) Idealized Residual.} The decoder reads bit $i$ from the matched-filter statistic $\rho_i=\langle\hat z,c_i\rangle/d$ on the recovered chip $\hat z=\alpha\,s(b)+\nu$, where $s(b)=\tfrac{1}{\sqrt K}\sum_{j}(2b_j-1)c_j$ and $\nu$ collects the cover-suppression residual and the cross-code interference. We idealize $\nu$ as carrying no systematic component along any secret codeword, that is $\mathbb{E}[\langle\nu,c_i\rangle]=0$ for every $i$. In the deployed system $\nu$ is produced by a learned projection of a real image and this is an approximation, not a property we establish. With (A1), (A2), and the near-orthogonality $\langle c_i,c_j\rangle/d\to 0$ for $i\neq j$, the clean statistic concentrates at margin $|\mathbb{E}\,\rho_i|=\alpha/\sqrt K$ with sign $2b_i-1$.

\smallskip
\noindent
\textbf{(A3) Codeword-independent Perturbation.} The chip-space perturbation $\Delta$ induced by the attack is statistically independent of the codebook $\{c_i\}$. This is the load-bearing assumption. It does not follow automatically for an attacker that observes $x_w$, because $x_w$ is itself a function of the codewords, so any attack that extracts code-correlated structure from the watermarked image falls outside the model.

\begin{proof}
Let the attack map $\hat z\mapsto\hat z+\Delta$ with $\Delta$ satisfying (A3) and $\|\Delta\|_2\le B$. The induced shift on bit $i$ is:
\begin{equation}
\Delta\rho_i \;=\; \frac{\langle\Delta,c_i\rangle}{d} \;=\; \frac1d\sum_{j=1}^{d}\Delta_j\,c_{i,j}.
\end{equation}
Because $\Delta$ is fixed independently of the codewords and each $c_{i,j}$ is zero-mean and unit-variance:
\begin{equation}
\mathbb{E}[\Delta\rho_i]=0,
\qquad
\mathrm{Var}[\Delta\rho_i]=\frac1{d^2}\sum_{j=1}^{d}\Delta_j^2
=\frac{\|\Delta\|_2^2}{d^2}\le\frac{B^2}{d^2}.
\end{equation}
As a fixed linear combination of independent sub-Gaussian variables, $\Delta\rho_i$ is itself sub-Gaussian with variance proxy $\sigma^2=\|\Delta\|_2^2/d^2\le B^2/d^2$. The decoder errs on bit $i$ only if the shift cancels the margin, that is if $\Delta\rho_i$ has sign opposite to $2b_i-1$ and magnitude at least $\alpha/\sqrt K$; the sub-Gaussian tail bound therefore gives
\begin{equation}
\begin{aligned}
\Pr[\widehat b_i\neq b_i]
&\;\le\;
\Pr \Big[|\Delta\rho_i|\ge\tfrac{\alpha}{\sqrt K}\Big] \\
&\;\le\;
\exp \Big(-\frac{(\alpha/\sqrt K)^2}{2\sigma^2}\Big)
\le
\exp \Big(-\frac{\alpha^2 d^2}{2KB^2}\Big),
\end{aligned}
\end{equation}
which is the stated per-bit bound. The counting step is a first-moment argument and yields an expected-budget statement, not a high-probability one: for the expected number of flips $\sum_{i}\Pr[\widehat b_i\neq b_i]\le K\exp(-\alpha^2 d^2/(2KB^2))$ to reach a constant fraction $\rho_0 K$ we need $\exp(-\alpha^2 d^2/(2KB^2))\ge\rho_0$, i.e.\ $\alpha^2 d^2/(2KB^2)\le\ln(1/\rho_0)$, which rearranges to
\begin{equation}
B\;\ge\;\frac{\alpha d}{\sqrt{2K\ln(1/\rho_0)}}\;=\;\Omega \Big(\frac{\alpha d}{\sqrt K}\Big).
\end{equation}
Finally, the attacker acts in pixel space through a perturbation $\delta$, and the chip map is $L$-Lipschitz, so $\|\Delta\|_2=\|\,\mathrm{chip}(x_w + \delta)-\mathrm{chip}(x_w)\|_2\le L\|\delta\|_2\le L\sqrt d\,\|\delta\|_\infty$. Requiring the chip budget $B=\Omega(\alpha d/\sqrt K)$ to be reachable under $\|\delta\|_\infty\le\varepsilon$ gives $L\sqrt d\,\varepsilon=\Omega(\alpha d/\sqrt K)$, that is
\begin{equation}
\varepsilon\;=\;\Omega \Big(\frac{\alpha\sqrt d}{L\sqrt K}\Big),
\end{equation}
so under (A1)--(A3) the pixel budget needed to disrupt matched-filter recovery of a dense code grows as $\sqrt d$.
\end{proof}

The $\sqrt d$ scaling is the formal content of the design principle: unable to align $\Delta$ with any codeword, the attack must spend energy spread over all $d$ coordinates. When instead each bit occupies only an effective fraction $\phi$ of the coordinates, its energy lies in a small distinctive subspace that a generic operation can suppress without the code, lowering the required budget by roughly $\sqrt\phi$.

\smallskip
\noindent
\textbf{Scope.} Three limitations follow directly from (A1)--(A3). First, the result is a statement about the matched-filter statistic under a codeword-independent perturbation; it is not a robustness proof for the deployed decoder, which also contains a convolutional head and a per-bit gate, and it does not establish robustness to regeneration, compression, denoising, or latent-space sparsification, all of which we treat as empirical questions. Second, it is an expected-budget argument: it lower-bounds the budget an attack needs for its expected flip count to reach a constant fraction, and says nothing about the tail of that count. Third, (A3) excludes exactly the adaptive threats that matter most for a fixed codebook, namely attacks that estimate the code from one or many watermarked images, known-message and chosen-message attacks, known-key attacks, and any perturbation whose statistics depend on the embedded code. We regard the theorem as an explanation of a design choice, not as a security guarantee.

\subsection{Robustness Versus Attack Strength}
\label{app:strength}

\begin{figure}[t]
\centering
\includegraphics[width=\columnwidth]{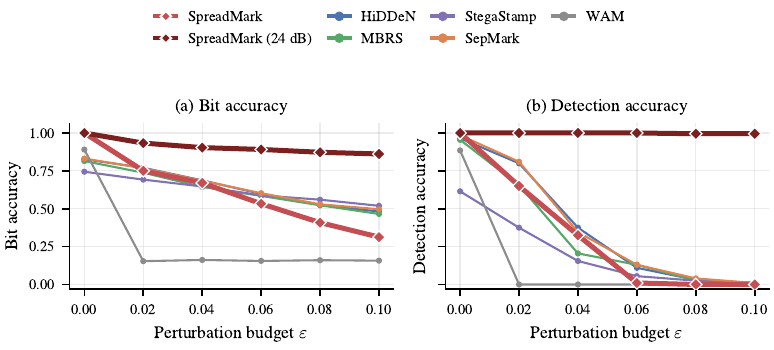}
\caption{White-box adversarial removal versus the perturbation budget $\varepsilon$ on COCO, as (a)~bit accuracy and (b)~detection accuracy, for \ours at 30 and 24\,dB and the baselines with a differentiable decoder.}
\label{fig:eps}
\end{figure}
\begin{figure*}[t]
\centering
\includegraphics[width=\linewidth]{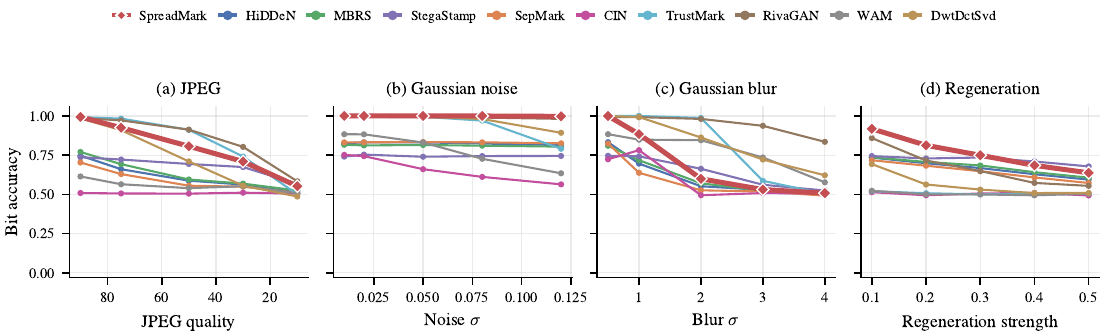}
\caption{Robustness versus attack strength on COCO for every scheme, reported as bit accuracy: (a)~JPEG quality, (b)~Gaussian noise, (c)~Gaussian blur, and (d)~Regeneration strength. \ours is drawn in bold.}
\label{fig:strength}
\end{figure*}

\noindent
\textbf{Adversarial Removal.} Figure~\ref{fig:eps} sweeps the white-box budget $\varepsilon$ for \ours at two quality settings and for the baselines that expose a differentiable decoder. At a matched budget the 24\,dB \ours holds detection at or above 0.98 across the whole range, whereas every other scheme, including the 30\,dB \ours, has lost detection by $\varepsilon = 0.06$; bit accuracy follows the same separation. The gap between the two \ours points, against an unchanged decoder, confirms that robustness in this regime is governed by the embedding gain and not by the receiver, consistent with the energy argument of Theorem~\ref{thm:bound}.

\smallskip
\noindent
\textbf{Distortion and Regeneration Strength.} Figure~\ref{fig:strength} sweeps the strength of each oblivious attack for all schemes at the 30\,dB quality setting. \ours retains full bit accuracy under additive noise across the tested range and degrades gracefully under JPEG compression. Blur is the sharp case: past kernel standard deviation 2 the low-pass filter desynchronizes the fixed spatial code, the geometric weakness discussed in the main text. Under regeneration \ours decays more slowly than every baseline at each strength.

\subsection{Effect of Sparsification-Aware Training}
\label{app:saat}

\begin{table}[t]
\centering
\small
\setlength{\tabcolsep}{9pt}
\renewcommand{\arraystretch}{1.15}
\begin{tabular}{lcccc}
\toprule
 & \multicolumn{2}{c}{\ours} & \multicolumn{2}{c}{w/o SA-AT} \\
\cmidrule(lr){2-3}\cmidrule(lr){4-5}
Attack & BA$\uparrow$ & DA$\uparrow$ & BA$\uparrow$ & DA$\uparrow$ \\
\midrule
Clean         & 1.00 & 1.00 & 1.00 & 1.00 \\
Sparsif.      & 1.00 & 1.00 & 1.00 & 1.00 \\
Regen         & \textbf{0.79} & \textbf{0.81} & 0.72 & 0.51 \\
JPEG          & \textbf{0.92} & \textbf{1.00} & 0.84 & 0.95 \\
Noise         & 1.00 & 1.00 & 1.00 & 1.00 \\
Blur          & 0.59 & 0.09 & 0.58 & 0.06 \\
Crop          & 0.67 & 0.25 & 0.69 & 0.40 \\
Adv.          & 0.32 & 0.00 & 0.31 & 0.00 \\
\midrule
PSNR (dB)     & \multicolumn{2}{c}{30.1} & \multicolumn{2}{c}{28.7} \\
\bottomrule
\end{tabular}
\caption{Effect of sparsification-aware adversarial training (SA-AT) at the 30\,dB operating point on COCO, as \BitAcc and \DetectAcc per attack for \ours with and without SA-AT.}
\label{tab:saat}
\end{table}

Table~\ref{tab:saat} separates the contribution of the SA-AT stage from that of the dense embedding, by retraining \ours with SA-AT disabled and everything else unchanged. The two properties we emphasize in the main text behave differently. Latent-space sparsification and clean recovery are already at 1.00 in both \BitAcc and \DetectAcc without SA-AT, so they are attributable to the embedding rather than to the training stage. Regeneration and JPEG are where SA-AT contributes, raising regeneration \DetectAcc from 0.51 to 0.81 and JPEG \DetectAcc from 0.95 to 1.00. Blur, crop, and adversarial removal are essentially unaffected. The comparison is not perfectly matched in quality, since the model without SA-AT converges to 28.7\,dB against 30.1\,dB for the full model, and the direction of that gap favors the ablated model; the regeneration difference is therefore, if anything, understated.

\subsection{Codeword Design}
\label{app:codeword}

\begin{table}[t]
\centering
\small
\resizebox{\columnwidth}{!}{%
\begin{tabular}{lcccccccc}
\toprule
Variant & Clean & Sparsif. & Regen & JPEG & Noise & Blur & Crop & Adv. \\
\midrule
\multicolumn{9}{l}{\textit{Bit accuracy (BA)}} \\
Gaussian, learnable $\alpha$ (\ours) & 1.00 & 1.00 & 0.79 & 0.92 & 1.00 & 0.59 & 0.67 & 0.32 \\
Bernoulli codewords & 1.00 & 1.00 & 0.70 & 0.77 & 1.00 & 0.57 & 0.69 & 0.32 \\
Fixed $\alpha$ & 1.00 & 1.00 & 0.86 & 0.94 & 1.00 & 0.61 & 0.68 & 0.35 \\
\midrule
\multicolumn{9}{l}{\textit{Detection accuracy (DA)}} \\
Gaussian, learnable $\alpha$ (\ours) & 1.00 & 1.00 & 0.81 & 1.00 & 1.00 & 0.09 & 0.25 & 0.00 \\
Bernoulli codewords & 1.00 & 1.00 & 0.44 & 0.73 & 1.00 & 0.05 & 0.37 & 0.00 \\
Fixed $\alpha$ & 1.00 & 1.00 & 0.94 & 1.00 & 1.00 & 0.14 & 0.36 & 0.00 \\
\bottomrule
\end{tabular}}
\caption{Codeword-design ablation near 30\,dB on COCO, as \BitAcc and \DetectAcc per attack, for Gaussian codewords with learned $\alpha$ (\ours), Bernoulli $\pm1$ codewords, and fixed $\alpha$.}
\label{tab:codeword}
\end{table}

Theorem~\ref{thm:bound} depends on the codewords only through their near-orthogonality, not through any particular distribution, and treats the embedding gain $\alpha$ as a fixed scalar. Table~\ref{tab:codeword} tests both assumptions by retraining \ours with two design changes at the 30\,dB quality setting: drawing the codewords from a symmetric $\pm1$ (Bernoulli) distribution instead of a Gaussian, and freezing $\alpha$ at the value the learned model converges to rather than training it. The properties the theorem governs are untouched: clean recovery and latent-space sparsification stay at 1.00 in both \BitAcc and \DetectAcc for every variant, and additive noise is likewise unaffected, which confirms that the sparsification robustness is a property of the dense code and not of how it is generated. Fixing $\alpha$ matches the learned model across the board and even improves regeneration slightly, so learning the amplitude end-to-end is a convenience rather than a necessity. Bernoulli codewords match on every column except regeneration, where they trail by a few points at a marginally lower quality setting. The spread-spectrum mechanism is thus robust to the codeword family and to whether the amplitude is learned, which supports reading Theorem~\ref{thm:bound} as a design principle that holds across reasonable codeword choices.

\end{document}